\newcommand{\keywords}[1]{\par\addvspace\baselineskip
\noindent\keywordname\enspace\ignorespaces#1}
\begin{document}

\mainmatter  
\title{Extracted Social Network Mining}

\titlerunning{Extracted Social Network Mining}
\author{Mahyuddin K. M. Nasution
\thanks{Proceeding of International Conference on Information Technology and Engineering Application (5-th ICIBA), 86-91, February 19-20, 2016.}
\authorrunning{Mahyuddin K. M. Nasution}
\institute{Fakultas Ilmu Komputer dan Teknologi Informasi (Fasilkom-TI)\\
Universitas Sumatera Utara, Padang Bulan, Medan 20155, Sumatera Utara, Indonesia\\
\mailsa\\}}
\toctitle{}
\tocauthor{}
\maketitle

\begin{abstract}
In this paper we study the relationship between the resources of social networks
by exploring the Web as big data based on a simple search engine. We have used set theory by utilizing the occurrence and co-occurrence for defining the singleton or doubleton spaces of event in a search engine model, and then provided them as representation of social actors and their relationship in clusters. Thus, there are behaviors of social actors and their relation based on Web.

\keywords{singleton, doubleton, cluster, behavior.}
\end{abstract}

\section{Introduction}
An extracted social network is a resultant from the methods of extracting social network from information sources (web pages, documents, or corpus) \cite{nasution2010} or the transformation of the raw data into a social network (pre-processing) \cite{arasu2003}. However, Web not only dealing with everything changed dynamically \cite{nasution2012a}, but Web as social media represent all members of social (population) \cite{bent2006,abdillah2014}, or containing big data as big picture of world. Thus, extraction of social networks always based on parts of social (communities) \cite{nasution2011}, and then to analyze it so that enable to generate useful information, for example, in the decision making \cite{nizamb2015}. This needs the sample that can represent population. Therefore, for getting significance of information source and trust, in the extracted social networks need the suitable approaches \cite{yao2012}.

In other side, the resources of social network such as vertices/actors, edges/re\-la\-tions, and Web/documents have the relations between one to another \cite{nasution2015}. A lot of relations between vertices and edges for expressing some of social structures in Social Network Analysis (SNA) \cite{scott2000}, but still a little formula to get information about relations among first two resources and Web \cite{memon2010}. Therefore, this needs a formalism study about resources.  In all sides of social network, this paper aimed to provide a basic means of discovering knowledge formally about the extracted social network, we call it social network mining. 

\section{Related Work and Motivation}
The social networks can be modeled naturally by the graph $G\langle V,E\rangle$ where $V = \{v_i|i=1,\dots,n\}$ is a set of vertices, and $\{e_j|j=1,\dots,m\}$ is a set of edges and $e_j \in E$ if two vertices $v_k\in V$ and $v_l\in V$ are adjacent, or $e_j = v_kv_l = v_lv_k$ \cite{nasution2012b}. In pre-processing of social network mining, extracting the social network from the information sources is the relatively approaches which is formed through modal relations \cite{nasution2010}. One of extraction methods is the the superficial method that depends heavily on the occurrence and the co-occurrence \cite{nasution2012a}. 

Let a word "Web" or a phrase "World Wide Web" is representation an object according to what we think \cite{nasution2011a}: the computer network is a social network \cite{wellman1996}. In expressing the behavior of social. Natural language processing (NLP) as basic layer of social network mining, and we define the term related it as follows.

\begin{definition}
A term $t_x$ consists of at least one or a set of words in a pattern, or $t_x = (w_1,\dots,w_l)$, $l\le k$, $k$ is a number of words $w$(s), $l$ is number of vocabularies (tokens) in $t_x$, $|t_x| = k$ is size of $t_x$.
\end{definition}

In NLP, 'Shahrul Azman Noah' and 'Opim Salim Sitompul' as terms, for example, are well-defined names of social actor. We have defined a dynamic space based on concept of NLP application as follows \cite{nasution2012c}.

\begin{definition} 
\label{def:se}
Let a set of web pages indexed by search engine by $\Omega$. For each search term $t_x$, where $t_x\in \Sigma$, i.e. a set of singleton search term of search engine. There are a dynamic space $\Omega$ containing the ordered pair of the term $t_{x_i}$ $i=1,\dots,I$ and web pages $\omega_{x_j}$ $j = 1,\dots,J$: $(t_{x_i},\omega_{x_j}) = (t_x,\omega_x)_{ij}$, or a vector space $\Omega_x = (t_x,\omega_x)_{ij}\subseteq\Omega$ is a singleton search engine event of web pages (\emph{singleton event}) that contain an occurrence (\emph{event}) of $t_x\in\omega$. 
\end{definition}

\begin{definition}
\label{def:impli}
Suppose $t_x\in q$ and $q$ is a query. Clustering web pages based on query is an implication, i.e if $\omega\Rightarrow t_x$ is \textsc{True} then a web page $\omega\in\Omega$ is relevant to $q$ or 
\[
\Omega_x =\cases{1 & if $t_x$ is true at all $\omega\in\Omega$,\cr
0 & otherwise}
\]
and $\Omega_x$ as the cluster of $t_x$. 
\end{definition}

Classically, a logical implication associated with inference \cite{nasution2012d}. 

\begin{lemma}
\label{lem:prob}
If $|\Omega|$ is the cardinality of $\Omega$ and $|\Omega_x|\le|\Omega|$ then probability of a singleton event $\Omega_x$ is
\[
P(t_x) = |\Omega_x|/|\Omega|\in[0,1].
\]
\end{lemma}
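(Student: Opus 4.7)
The plan is to derive the formula from the classical (Laplacian) definition of probability applied to the search-engine event space set up in Definition \ref{def:se}. First I would note that $\Omega$, the set of web pages indexed by the search engine, plays the role of the sample space, and that the singleton event $\Omega_x$ was defined as the subset of $\Omega$ consisting of those web pages on which the occurrence of $t_x$ is realised. From Definition \ref{def:se} we already have $\Omega_x \subseteq \Omega$, so the hypothesis $|\Omega_x| \le |\Omega|$ is in fact forced (and in particular $|\Omega_x| \ge 0$, because a cardinality is nonnegative).

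Next I would invoke Definition \ref{def:impli} to justify treating $\Omega_x$ as a genuine event: the indicator-style characterisation there shows that a page $\omega \in \Omega$ belongs to $\Omega_x$ precisely when the implication $\omega \Rightarrow t_x$ evaluates to \textsc{True}, so membership in $\Omega_x$ is a well-defined property of pages in the sample space. With that in hand, assigning uniform weight to each indexed page $\omega \in \Omega$ makes $\Omega$ a finite probability space, and the probability of the event ``a page drawn from $\Omega$ contains the occurrence of $t_x$'' is then exactly the ratio of favourable outcomes to total outcomes, namely $P(t_x) = |\Omega_x|/|\Omega|$.

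Finally I would verify the range assertion. Since $0 \le |\Omega_x| \le |\Omega|$ and $|\Omega| > 0$ (the search engine is assumed nontrivial, otherwise the ratio is undefined), dividing through gives $0 \le |\Omega_x|/|\Omega| \le 1$, so $P(t_x) \in [0,1]$, which completes the argument.

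The proof is essentially a bookkeeping exercise; the only real subtlety, and the step I would be most careful about, is making explicit that the uniform measure on $\Omega$ is the correct one to use here, i.e. that each indexed web page is treated as an equally likely elementary outcome. This is implicit in the set-theoretic framing of Definitions \ref{def:se} and \ref{def:impli}, but deserves to be stated so that the Laplacian formula can legitimately be invoked.
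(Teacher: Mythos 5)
Your proposal is correct and follows essentially the same route as the paper: the paper's proof likewise assigns each indexed page the uniform weight $p(\omega)=1/|\Omega|$ and sums these over the pages of $\Omega_x$ to obtain $|\Omega_x|/|\Omega|\in[0,1]$. Your explicit remark that the uniform measure is the hidden assumption is a fair observation, since the paper uses it without comment.
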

\begin{proof}
For any term $t_x \in q$, each web page $\omega \in \Omega$ is relevant to a query $q$ has a probability to other web pages in $\Omega$, $0\le p(\omega)=1/|\Omega|\le 1$. Probability of all web pages that relevant to a query in $\Omega$ is $0\le p(\Omega_x) = \sum p(\omega) = |\Omega_x|/|\Omega|\le 1$, or $P(t_x) = p(\Omega_x)$. 
\end{proof}

In the same concept we have defined also the co-occurrence based on NLP \cite{nasution2012e}. 

\begin{definition}
\label{def:doubleton}
Let $t_x$ and $t_y$ are two different search terms, $t_x\ne t_y$, $t_x,t_y\in\Sigma$, where $\Sigma$ is a set of singleton term of search engine. There are a dynamic space $\Omega$ containing the ordered pair of two terms $\{t_{x_i},t_{y_i}\}$ $i=1,\dots,I$ and web pages $\omega_{x_j}$ $j = 1,\dots,J$: $(\{t_{x_i},t_{y_i}\},\omega_{xy_j}) = (\{t_x,t_y\},\omega_{xy})_{ij}$, or a vector space $\Omega_x\cap\Omega_y = (\{t_x,t_y\},\omega_{xy})_{ij}\subseteq\Omega$ is a doubleton search engine event of web pages (\emph{doubleton event}) that contain a co-occurrence (\emph{event}) of $t_x,t_y\in\omega$. 
\end{definition}

\begin{lemma}
If $|\Omega|$ is the cardinality of $\Omega$ and $|\Omega_x\cap\Omega_y|\le|\Omega|$ then probability of a doubleton event $\Omega_x\cap\Omega_y$ is 
$
P(\{t_x,t_y\}) = |\Omega_x\cap\Omega_y|/|\Omega|\in[0,1].
$
\end{lemma}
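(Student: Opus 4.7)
The plan is to mirror the proof of Lemma~\ref{lem:prob} almost verbatim, replacing the singleton event $\Omega_x$ by the doubleton event $\Omega_x\cap\Omega_y$ given in Definition~\ref{def:doubleton}. First I would appeal to Definition~\ref{def:doubleton} to view $\Omega_x\cap\Omega_y$ as the subset of $\Omega$ consisting of those web pages $\omega$ in which the co-occurrence of $t_x$ and $t_y$ is realised; this identification is what links the ``vector space'' formulation of the definition back to a genuine subset of $\Omega$ on which the cardinality $|\Omega_x\cap\Omega_y|$ is meaningful, and it will also justify viewing $\{t_x,t_y\}$ as a single search-term object in the sense of Lemma~\ref{lem:prob}.

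Next, I would equip $\Omega$ with the uniform probability $p(\omega)=1/|\Omega|$ for every $\omega\in\Omega$, exactly as in the singleton lemma, so that $0\le p(\omega)\le 1$. Since $\Omega_x\cap\Omega_y\subseteq\Omega$, finite additivity gives
\[
p(\Omega_x\cap\Omega_y)=\sum_{\omega\in\Omega_x\cap\Omega_y}p(\omega)=\frac{|\Omega_x\cap\Omega_y|}{|\Omega|},
\]
and one then defines $P(\{t_x,t_y\}):=p(\Omega_x\cap\Omega_y)$, in complete analogy with the assignment $P(t_x)=p(\Omega_x)$ used in the singleton proof.

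Finally I would verify the range. Non-negativity of cardinality gives $P(\{t_x,t_y\})\ge 0$, while the hypothesis $|\Omega_x\cap\Omega_y|\le|\Omega|$ gives $P(\{t_x,t_y\})\le 1$, so $P(\{t_x,t_y\})\in[0,1]$, as required.

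The only real obstacle is the bookkeeping between the formal ``ordered pair'' presentation of $\Omega_x\cap\Omega_y$ in Definition~\ref{def:doubleton} and its use here as a plain subset of $\Omega$; once that identification is made explicit, the argument is a direct transcription of the singleton case and no new probabilistic content is needed, which is why I expect the proof to be very short in the final writeup.
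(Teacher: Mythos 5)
Your proposal is correct and matches the paper's approach: the paper's own proof simply cites Definition~\ref{def:doubleton} and Lemma~\ref{lem:prob} as a ``direct consequence,'' and your writeup is just the explicit unpacking of that same reduction (uniform measure on $\Omega$, finite additivity over the subset $\Omega_x\cap\Omega_y$, and the range check). No divergence to report.
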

\begin{proof} 
As direct consequence of: Definition \ref{def:doubleton} and Lemma \ref{lem:prob}.
\end{proof}

At the time conducting the extraction for getting occurrences and co-occur\-ren\-ce, we submitted the queries containing the name to Google search engine, we have the hit count = 20,000 for 'Shahrul Azman Noah' (as occurrence) and = 3,000 for 'Opim Salim Sitompul' (as occurrence), while the hit count for 'Shahrul Azman Noah,Opim Salim Sitompul' (as co-occurrence) is 218. However, if the query contains names that are enclosed in quotation marks, produced the hit count = 2,680 for "Shahrul Azman Noah" (as occurrence) and the hit count = 5,650 for "Opim Salim Sitompul" (as occurrence), while the hit count for '"Shahrul Azman Noah","Opim Salim Sitompul"' (as co-occurrence) is 61. Therefore, information about social actors in occurrence and social networks in co-occurrences are different in behavior, and we have an assumption \cite{chen2008,wang2011}: Each probability of forming its own distribution. Different data distribution gives different behavior. In this case, we have the problem.

\begin{theorem}
\label{theo:behavior}
The behavior of a cluster describes the behavior of a social actor, then the behavior of other actors expressed by the relationships between the clusters.
\end{theorem}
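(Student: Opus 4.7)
The plan is to unpack the informal statement into two concrete assertions that correspond directly to the machinery built up in Definitions \ref{def:se}--\ref{def:doubleton} and the two probability lemmas: (a) for a single social actor, identified with a term $t_x\in\Sigma$, the singleton cluster $\Omega_x$ (with its induced probability $P(t_x)$) serves as the formal representative of the actor's behavior on the Web; (b) for any pair of distinct actors $t_x\ne t_y$, the relationship between them is exhibited by the doubleton cluster $\Omega_x\cap\Omega_y$, whose probability $P(\{t_x,t_y\})$ is inherited from the same space $\Omega$.

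For (a), I would start by invoking the paper's convention that a social actor is a well-defined term in the sense of Definition 1, so that $t_x\in\Sigma$. Definition \ref{def:se} then produces the singleton event $\Omega_x\subseteq\Omega$ from the occurrences of $t_x$ across indexed pages, and Definition \ref{def:impli} reinterprets this event as the cluster determined by the implication $\omega\Rightarrow t_x$. Lemma \ref{lem:prob} finally attaches to this cluster the probability $P(t_x)=|\Omega_x|/|\Omega|\in[0,1]$. Because the only data about $t_x$ visible through a search engine is the occurrence statistic encoded in $\Omega_x$, the pair $(\Omega_x,P(t_x))$ is by construction the behavior of the actor, which proves the first half of the statement.

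For (b), I would take a second actor $t_y$ with $t_y\ne t_x$ and apply Definition \ref{def:doubleton} to obtain the doubleton event $\Omega_x\cap\Omega_y\subseteq\Omega$, together with the subsequent lemma assigning $P(\{t_x,t_y\})=|\Omega_x\cap\Omega_y|/|\Omega|\in[0,1]$. Since $\Omega_x\cap\Omega_y\subseteq\Omega_x$ and $\Omega_x\cap\Omega_y\subseteq\Omega_y$, this co-occurrence cluster is literally the set-theoretic relation between the two clusters from part (a); consequently the behavior of $t_y$ as seen from $t_x$ (and vice versa) is expressed by the cluster-to-cluster intersection, and induction on the number of actors extends the argument to arbitrary finite collections by iterating doubleton intersections pairwise. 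Combining (a) and (b) yields the theorem.

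The main obstacle is not computational but interpretive: the word \emph{behavior} is not formalized in the excerpt, so the central move in the proof is to commit to the identification of ``behavior'' with ``cluster together with its induced occurrence probability.'' Once that identification is accepted, everything reduces to chaining Definitions \ref{def:se}, \ref{def:impli}, \ref{def:doubleton} with Lemma \ref{lem:prob} and its doubleton counterpart; the delicate step is justifying that this identification is the \emph{only} one consistent with the superficial/NLP extraction framework set up in Section 2, which I would argue by noting that $\Omega_x$ and $\Omega_x\cap\Omega_y$ exhaust the information a search-engine-based extractor can observe about a single actor and a pair of actors respectively.
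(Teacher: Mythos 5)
Your decomposition into (a) singleton cluster as the actor's behavior and (b) doubleton cluster as the relationship is the same skeleton the paper uses: its Proposition \ref{pro:actor} is your part (a) and its Proposition \ref{pro:relation} is your part (b), both reached through the ``stand-alone cluster'' lemmas (Lemmas \ref{lem:sac} and \ref{lem:sacp}) and the association-rule Definition \ref{def:ar}. Where you genuinely diverge is in the final, quantitative step. You identify ``behavior'' with the pair (cluster, induced probability), so the relationship is measured by $P(\{t_x,t_y\})=|\Omega_x\cap\Omega_y|/|\Omega|$. The paper instead routes the second clause of the theorem through Definition \ref{def:rela} and Definition \ref{def:extract}: it sets $r_p(a_k,a_l)=B_{a_k}\cap B_{a_l}=J_c$, a similarity coefficient built from $|\Omega_{a_k}|$, $|\Omega_{a_l}|$ and $|\Omega_{a_k}\cap\Omega_{a_l}|$, and declares the theorem proved once any such measure behaves like $J_c$; the maps $\gamma_1,\gamma_2$ then carry the cluster behavior onto vertices and edges of the extracted network, with $e_j\in E$ iff $r_p>0$. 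Your probability-based normalization is arguably cleaner and avoids the paper's internally awkward use of association rules (which require $\Omega_{b_k}\cap\Omega_{b_l}=\emptyset$ while the doubleton cluster must be nonempty), but it loses the explicit bridge from clusters to the graph $\langle V,E\rangle$, which is the part of the statement that says the relationship is ``between the clusters'' as an edge of the extracted network. Your pairwise induction to arbitrary finite actor sets is harmless but unnecessary, since the paper works only with dyads. You are also right that the whole argument hinges on an unformalized notion of ``behavior''; the paper makes the same interpretive commitment implicitly, so flagging it explicitly is a point in your favor rather than a gap.
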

 
\section{Model and Approach}
Literally, we can identify social actor based on Named-Entity Recognition (NER) in web pages or any document as follow.

\begin{definition}
Suppose there are the well-defined actors, then there is $A = \{a_i|i=1,\dots,n\}$ as a set of social actors.
\end{definition} 

Each actor literally also has attributes, thus we can define it as follow \cite{nasution2014a}.
\begin{definition}
Suppose there be the well-identified attributes, then there is $B = \{b_j|j=1,\dots,m\}$ as a set of attributes of actors.
\end{definition}

\begin{definition}
\label{def:rela}
For all pairs (\emph{dyads}) of $n$ social actors, a set of relationships $R = \{r_p|p=1,\dots,m\}$ where a relationship between two actors there are a tie connect them by one or more relations, or $r_p(a_k,a_l) = B_{a_k}\cap B_{a_l}$.
\end{definition}

\begin{definition}
\label{def:extract}
An extracted social network, i.e. $SN = \langle V,E,$ $A,R,\gamma_1,\gamma_2\rangle$ satisfies the conditions as follow:
\begin{enumerate}
\item $\gamma_1 : A\stackrel{1:1}{\rightarrow} V$, and
\item $\gamma_2 : R\rightarrow E$. 
\end{enumerate}
\end{definition}

As an approach to formalize the relationship between resources of social networks, and for exploring the behavior, we use the association rule.

\begin{definition}
\label{def:ar}
Let $B = \{b_1,b_2,\dots,b_m\}$ is a set of attributes. Let $M_i$ is a set of transactions are subsets of attributes or $M_i\subseteq B$. The implication $\Omega_{b_k}\Rightarrow\Omega_{b_l}$ with two possible value \textsc{True} or \textsc{False} as an \emph{association rule} if $\Omega_{b_k}, \Omega_{b_l}\subset B$ and $\Omega_{b_k}\cap\Omega_{b_l}=\emptyset$
\end{definition}

\section{Formulation of Behavior}

Each cluster represents an actor based on the extraction of social networks.

\begin{lemma}
\label{lem:sac}
If for a cluster $\Omega_x$ of a search term $t_x$ there exist other cluster $\Omega_y$ of a search term $t_y$ where $t_x\ne t_y$, then $\Omega_x$ is a \emph{stand-alone cluster}.
\end{lemma}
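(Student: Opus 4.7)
The plan is to argue directly from the singleton/doubleton dichotomy set up in Definitions \ref{def:se} and \ref{def:doubleton}. A stand-alone cluster should mean one whose membership is determined by a single search term, without needing to co-occur with any other term; therefore the route I would take is to show that the cluster $\Omega_x$ of Definition \ref{def:se} is already well-formed from $t_x$ alone, and that the presence of a second term $t_y\ne t_x$ does not force $\Omega_x$ to be redefined via the doubleton $\Omega_x\cap\Omega_y$.

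First I would fix $t_x,t_y\in\Sigma$ with $t_x\ne t_y$ and recall that, by Definition \ref{def:se}, $\Omega_x=(t_x,\omega_x)_{ij}\subseteq\Omega$ is the singleton event consisting of those ordered pairs $(t_x,\omega_x)$ for which $t_x$ occurs in $\omega_x$. This construction depends only on $t_x$ and the indexed corpus $\Omega$. Next, using Definition \ref{def:impli}, I would note that the indicator of $\Omega_x$ is completely specified by the implication $\omega\Rightarrow t_x$, so membership in $\Omega_x$ is decided pagewise without reference to $t_y$. In particular, removing or adding $t_y$ to $\Sigma$ leaves $\Omega_x$ unchanged, and the doubleton $\Omega_x\cap\Omega_y$ of Definition \ref{def:doubleton} is a strictly further refinement: $\Omega_x\cap\Omega_y\subseteq\Omega_x$, with equality only in degenerate cases. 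This refinement cannot replace $\Omega_x$, so $\Omega_x$ retains its own identity as a cluster, which is exactly what "stand-alone" should capture.

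Finally I would close by observing that the existence of $\Omega_y$ merely enables the formation of a second, independently defined singleton cluster, and that the two clusters $\Omega_x$ and $\Omega_y$ coexist in $\Omega$ as distinct singleton events indexed by the distinct terms $t_x,t_y$. Hence $\Omega_x$ is stand-alone in the sense that it is generated by a single search term and is not defined through any interaction with $\Omega_y$.

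The main obstacle is that the paper does not give a formal definition of "stand-alone cluster," so the proof is essentially a reading of the term against the existing framework; the argument must make clear that $\Omega_x$ is self-contained under Definition \ref{def:se} while still allowing the coexistent cluster $\Omega_y$ guaranteed by the hypothesis, without conflating it with the doubleton event of Definition \ref{def:doubleton}.
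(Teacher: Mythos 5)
Your argument is sound on its own terms, but it is not the route the paper takes. The paper's proof invokes Definition \ref{def:ar} (the association rule) rather than Definition \ref{def:impli}: it reads the existence of the second cluster as an implication $t_x\Rightarrow t_y$, hence $\Omega_x\Rightarrow\Omega_y$, and then uses the disjointness condition $\Omega_{b_k}\cap\Omega_{b_l}=\emptyset$ built into the definition of an association rule to conclude $\Omega_x\cap\Omega_y=\emptyset$; for the paper, \emph{that} disjointness is what ``stand-alone'' means. You instead read ``stand-alone'' as definitional self-containment --- membership in $\Omega_x$ is decided pagewise by the implication $\omega\Rightarrow t_x$ alone, so $t_y$ plays no role in constituting $\Omega_x$ --- and you explicitly decline to claim $\Omega_x\cap\Omega_y=\emptyset$, treating $\Omega_x\cap\Omega_y$ as a possibly nonempty refinement. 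The two readings buy different things. The paper's version plugs directly into the association-rule machinery it wants to reuse later, but it commits to a disjointness claim that is hard to square both with the hit-count data in Section 2 (the co-occurrence counts 218 and 61 are nonzero) and with Lemma \ref{lem:sacp}, whose proof asserts $\Omega_{a_k}\cap\Omega_{a_l}\ne\emptyset$ for two distinct actor terms. Your version avoids that tension and is arguably the more defensible interpretation, at the cost of not connecting to Definition \ref{def:ar}, which is the hook the subsequent propositions actually cite. Since neither the paper nor your proposal supplies a formal definition of ``stand-alone cluster,'' be aware that your proof establishes a different property of $\Omega_x$ than the paper's proof does, and a reader following the paper's later arguments will expect the disjointness reading.
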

\begin{proof}
Based on Definition \ref{def:se} and Definition \ref{def:ar}, we have $t_x \Rightarrow t_y$ literally or $\Omega_x\Rightarrow\Omega_y$, but $t_x\ne t_y$ such that $\Omega_x\cap\Omega_y = \emptyset$. Therefore, $\Omega_x$ is a stand-alone cluster.
\end{proof}

\begin{proposition}
\label{pro:actor}
If $t_{a_i}\in q$ $i=1,\dots,n$ and $\Omega_{a_i}$ are a stand-alone cluster for each of $\{a_1,a_2,\dots,a_n\} = A$, then $\Omega_{a_i}$ represent the behavior of $a_i\in A$, respectively.
\end{proposition}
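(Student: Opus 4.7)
The plan is to chain together the previously established correspondences so that each actor is identified uniquely with its singleton event, and then invoke Theorem \ref{theo:behavior} to transfer ``behavior'' from the cluster side to the actor side. First, for each $a_i\in A$ the search term $t_{a_i}$ is well-defined as a name by the NER assumption underlying the actor set, so Definition \ref{def:se} yields a singleton event $\Omega_{a_i}\subseteq\Omega$, and by Lemma \ref{lem:prob} this cluster carries a well-defined probability $P(t_{a_i})=|\Omega_{a_i}|/|\Omega|$. The hypothesis that $t_{a_i}\in q$ ensures, via Definition \ref{def:impli}, that $\Omega_{a_i}$ is precisely the set of web pages in which $t_{a_i}$ is true, so the cluster is the full search-engine footprint of $a_i$.

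Next I would use the stand-alone assumption together with Lemma \ref{lem:sac} to show that the family $\{\Omega_{a_i}\}_{i=1}^{n}$ is pairwise disjoint: for $i\ne j$ we have $t_{a_i}\ne t_{a_j}$, hence $\Omega_{a_i}\cap\Omega_{a_j}=\emptyset$. Combined with the bijection $\gamma_1:A\stackrel{1:1}{\rightarrow}V$ from Definition \ref{def:extract}, this gives a one-to-one correspondence $a_i\leftrightarrow \Omega_{a_i}$: no two actors can share a cluster, and every actor is accounted for. In particular, no attribute of $\Omega_{a_i}$ can be mistakenly ascribed to another $a_j$, so any structural property extracted from $\Omega_{a_i}$ is canonically attached to $a_i$ alone.

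With this identification in hand, I would apply Theorem \ref{theo:behavior}: the behavior of a cluster describes the behavior of a social actor. Since $\Omega_{a_i}$ is the stand-alone cluster canonically associated (through $\gamma_1$ and the singleton construction of Definition \ref{def:se}) with $a_i$, the behavior encoded by $\Omega_{a_i}$ must be the behavior of $a_i$, as claimed.

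The main obstacle is conceptual rather than computational: the proposition turns on the informal phrase ``represents the behavior of,'' which is pinned down only by Theorem \ref{theo:behavior}. The work is therefore to justify the unambiguous assignment of each cluster to exactly one actor before invoking that theorem; once the stand-alone hypothesis has been used to rule out overlap between the $\Omega_{a_i}$ and the bijection $\gamma_1$ has been used to rule out overlap on the actor side, the conclusion follows directly. If overlap were allowed, the same cluster behavior could encode more than one actor and the proposition would reduce to the mixed-cluster case handled in the relational part of the theory (Definition \ref{def:rela}) rather than to a per-actor statement.
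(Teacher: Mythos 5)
Your overall architecture --- identify each actor with its singleton event, show the clusters are pairwise disjoint, then conclude per-actor representation --- tracks the second half of the paper's proof, and your disjointness step via Lemma \ref{lem:sac} is essentially what the paper does when it argues $\Omega_{a_k}\cap\Omega_{a_l}=\Omega_{a_l}\cap\Omega_{a_k}=\emptyset$ for $t_{a_k}\ne t_{a_l}$. But there is one genuine logical problem: you invoke Theorem \ref{theo:behavior} to transfer ``behavior'' from the cluster side to the actor side. In this paper Theorem \ref{theo:behavior} is not an available tool at this point; it is introduced with the words ``we have the problem,'' i.e.\ as the target of the whole development, and it is only declared proved after the Jaccard-coefficient discussion that follows Proposition \ref{pro:relation} --- a discussion which itself rests on Proposition \ref{pro:actor}. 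Using the theorem to prove the proposition is therefore circular within the paper's own logical ordering.

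What the paper does instead is ground the word ``behavior'' directly, without the theorem: from Definition \ref{def:ar} it reads off $\omega\in\Omega\Rightarrow t_a\in q$, so that each relevant page $\omega$ is a representation of the actor $a$, and then it appeals to Lemma \ref{lem:prob} --- each $\omega\in\Omega$ carries a probability, so the cluster $\Omega_a=\{(t_a,\omega_a)_{ij}\}$ of Definition \ref{def:se} comes equipped with a probability distribution, and it is this distribution over the actor's search-engine footprint that the paper takes to \emph{be} the behavior of $a$. You actually set up exactly this ingredient in your first paragraph (the probability $P(t_{a_i})=|\Omega_{a_i}|/|\Omega|$ from Lemma \ref{lem:prob}) but then do not use it as the definition of behavior; if you replace the appeal to Theorem \ref{theo:behavior} with that probabilistic reading, your argument closes and matches the paper's. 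Your additional use of $\gamma_1$ from Definition \ref{def:extract} is harmless but not needed here; the paper only brings $\gamma_1$ and $\gamma_2$ into play in the discussion after Proposition \ref{pro:relation}.
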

\begin{proof}
Based on Definition \ref{def:ar}, we have $\omega\in\Omega \Rightarrow t_a\in q$ and $\omega$ is representation of actor $a\in A$, and because of each $\omega \in \Omega$ has a probability then $\omega\in\Omega$ be the behavior of actor $a\in A$, but based on Definiton \ref{def:se} $\Omega_a = \{(t_a,\omega_a)_{ij}\}$, $\Omega_a$ is representatin of $a\in A$. Let there be $t_{a_k},t_{a_l}\in q$ $t_{a_k}\ne t_{a_l}$, we have $\Omega_{a_k}\Rightarrow\Omega$ and $\Omega_{a_l}\Rightarrow\Omega$: Even though $\Omega_{a_k}\Rightarrow\Omega_{a_l}$ or $\Omega_{a_l}\Rightarrow\Omega_{a_k}$, but $\Omega_{a_k}\cap\Omega_{a_l} = \Omega_{a_l}\cap\Omega_{a_k} = \emptyset$. Each of $\Omega_{a_i}, i=1,\dots,n$ is a stand-alone cluster that represent the behavior of an actor.
\end{proof}

\begin{lemma}
\label{lem:sacp}
Let $t_{a_k}\ne t_{a_l}$ is the different search terms represent two social actors. If $t_{a_k},t_{a_l}\in q$, then $\Omega_{a_{kl}}$ is a stand-alone cluster for a pair of social actors.
\end{lemma}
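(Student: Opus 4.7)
The plan is to treat the pair $\{t_{a_k},t_{a_l}\}$ as a composite search term and then mimic the argument of Lemma \ref{lem:sac} one level up, replacing singleton events with doubleton events. First I would invoke Definition \ref{def:doubleton}: since $t_{a_k}\ne t_{a_l}$ and both terms occur in $q$, the query induces the doubleton event $\Omega_{a_k}\cap\Omega_{a_l}\subseteq\Omega$, which I will identify with $\Omega_{a_{kl}}$, i.e.\ the cluster of web pages witnessing the co-occurrence of the two actors.

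Next, I would show that $\Omega_{a_{kl}}$ is generated by an implication of the same form as in Definition \ref{def:impli} and Definition \ref{def:ar}: a page $\omega\in\Omega$ belongs to $\Omega_{a_{kl}}$ exactly when $\omega\Rightarrow\{t_{a_k},t_{a_l}\}$ is \textsc{True}, so the pair acts as a single composite antecedent. This lets me formally treat $\{t_{a_k},t_{a_l}\}$ as a ``term'' in the sense required by Lemma \ref{lem:sac}.

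Then I would produce a second distinct composite term to play the role of the ``other cluster'' demanded by Lemma \ref{lem:sac}. Pick any pair $\{t_{a_m},t_{a_n}\}$ with $\{t_{a_m},t_{a_n}\}\ne\{t_{a_k},t_{a_l}\}$; by Proposition \ref{pro:actor} each underlying singleton cluster is stand-alone, so at least one of $t_{a_m},t_{a_n}$ differs from both $t_{a_k},t_{a_l}$. Using Lemma \ref{lem:sac} on that singleton, the corresponding doubleton $\Omega_{a_{mn}}$ is disjoint from $\Omega_{a_{kl}}$, which witnesses that $\Omega_{a_{kl}}$ is stand-alone as a cluster indexed by the pair.

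The step I expect to be the main obstacle is the formal lifting: the definition of ``stand-alone cluster'' in Lemma \ref{lem:sac} is phrased for a single term $t_x\in\Sigma$, while here the indexing object is an unordered pair. I will need to argue that Definition \ref{def:doubleton} makes $\{t_{a_k},t_{a_l}\}$ behave as a legitimate element of the relevant search-term space (so that the hypotheses of Lemma \ref{lem:sac} apply \emph{mutatis mutandis}), and that the disjointness $\Omega_{a_{kl}}\cap\Omega_{a_{mn}}=\emptyset$ really follows from the disjointness of the underlying singleton clusters rather than requiring an independent probabilistic argument; everything else is a direct consequence of Definition \ref{def:ar} and Proposition \ref{pro:actor}.
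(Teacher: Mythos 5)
Your overall strategy---lift Lemma~\ref{lem:sac} from singleton to doubleton events by treating $\{t_{a_k},t_{a_l}\}$ as a composite term---is not the route the paper takes. The paper's proof goes in the opposite direction: it splits the implication $\omega\Rightarrow\{t_{a_k}\wedge t_{a_l}\}$ into the conjunction $(\omega\Rightarrow t_{a_k})\wedge(\omega\Rightarrow t_{a_l})$, asserts $\Omega_{a_k}\cap\Omega_{a_l}\ne\emptyset$, and then \emph{collapses} the pair into a single merged term, rewriting $\Omega_{a_{kl}}=\{(t_{a_k}\wedge t_{a_l},\omega_{a_k}\wedge\omega_{a_l})_{ij}\}=\{(t_a,\omega_a)_{ij}\}=\Omega_a$, so that the doubleton event literally \emph{is} a singleton event of a new term and stand-aloneness is inherited from Lemma~\ref{lem:sac} without ever exhibiting a second doubleton. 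Your version instead keeps the doubleton structure and tries to witness stand-aloneness by constructing a second pair $\{t_{a_m},t_{a_n}\}$ and a disjoint cluster $\Omega_{a_{mn}}$. That is a legitimate alternative reading of what ``stand-alone'' should mean for pairs, and arguably closer to the letter of Lemma~\ref{lem:sac}, but it is not what the paper does.

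The genuine gap is in your disjointness step, and it is not a technicality. You propose to derive $\Omega_{a_{kl}}\cap\Omega_{a_{mn}}=\emptyset$ ``from the disjointness of the underlying singleton clusters,'' i.e.\ from the claim in Lemma~\ref{lem:sac} that $t_x\ne t_y$ forces $\Omega_x\cap\Omega_y=\emptyset$. But if you are allowed to use that claim on the underlying singletons, you must also apply it to $t_{a_k}\ne t_{a_l}$ themselves, which gives $\Omega_{a_{kl}}=\Omega_{a_k}\cap\Omega_{a_l}=\emptyset$ and makes the lemma vacuous (and contradicts Definition~\ref{def:doubleton} and the paper's own assertion, inside this very proof, that $\Omega_{a_k}\cap\Omega_{a_l}\ne\emptyset$). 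So the disjointness of doubletons cannot be inherited from singleton disjointness without destroying the object you are trying to study; an independent argument is required, and the paper supplies none because it takes the other route. A secondary gap: the hypotheses of the lemma give you only the two terms $t_{a_k},t_{a_l}$, so the existence of a second pair $\{t_{a_m},t_{a_n}\}$ (hence of the witnessing cluster $\Omega_{a_{mn}}$) is an additional assumption on $|A|$ that you would need to state explicitly.
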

\begin{proof}
As applicable in Lemma \ref{lem:sac} to Definition \ref{def:impli} and Definition \ref{def:se}, $\omega\in\Omega\Rightarrow\{t_{a_k},t_{a_l}\}\in q$ or $\omega\in\Omega\Rightarrow\{t_{a_k}\wedge t_{a_l}\}\in q$ and
$
(\omega\in\Omega\Rightarrow t_{a_k}\in q)\wedge(\omega\in\Omega\Rightarrow t_{a_l}\in q)
$
and we have $\Omega_{a_k}\Rightarrow\Omega_{a_l}$ and $t_{a_k}\ne t_{a_l}$, but $\Omega_{a_k}\cap\Omega_{a_l}\ne\emptyset$ then $\Omega_{a_l}\Rightarrow\Omega_{a_k}$. However, based on Definition \ref{def:ar} we have $((\Omega_{a_k}\Rightarrow\Omega_{a_l})\Rightarrow\Omega) = ((\Omega_{a_l}\Rightarrow\Omega_{a_k})\Rightarrow\Omega)$. In other word, 
$\Omega_{a_{kl}} = \{(t_{a_{kl}},\omega_{a_{kl}})_{ij}\}
                = \{(t_{a_k}\wedge t_{a_l},\omega_{a_k}\wedge\omega_{a_l})_{ij}\}
                = \{(t_a\wedge t_a,\omega_a\wedge\omega_a)_{ij}\} (i,j=k\wedge l)
                = \{(t_a,\omega_a)_{ij}\} (i,j=k\wedge l)
                = \Omega_a$.
Thus, $\Omega_{a_{kl}}$ is a stand-alone cluster of a pair of social actors.
\end{proof}

\begin{proposition}
\label{pro:relation}
If $\Omega_{a_{kl}}$ is a stand-alone cluster for a pair of $\{a_1,a_2,\dots,a_n\} = A$, then $\Omega_{a_kl}$ represent the behavior of relationship between $a_i\in A$, $i=1,\dots,n$.
\end{proposition}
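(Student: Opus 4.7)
The plan is to mirror the proof of Proposition \ref{pro:actor}, lifting each step from singleton to doubleton clusters. First I would invoke Lemma \ref{lem:sacp} to secure the hypothesis: for every pair of distinct search terms $t_{a_k},t_{a_l}\in q$ representing actors $a_k,a_l\in A$, the doubleton cluster $\Omega_{a_{kl}}$ behaves as a stand-alone cluster, i.e., in the sense established there it collapses to a single unit $\Omega_a$ on which the singleton-style reasoning is legitimate.

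Next I would unpack $\Omega_{a_{kl}}$ via Definition \ref{def:doubleton} as the set of ordered pairs $(\{t_{a_k},t_{a_l}\},\omega_{a_{kl}})_{ij}$ whose web-page component $\omega_{a_{kl}}$ is precisely a witness of the co-occurrence of the two actor terms. Using Definition \ref{def:rela}, the relationship $r_p(a_k,a_l)=B_{a_k}\cap B_{a_l}$ is carried by shared attributes of the two actors, and a co-occurring web page is exactly such a witness; so each $\omega\in\Omega_{a_{kl}}$ can be read as a representation of the tie connecting $a_k$ and $a_l$. Then I would invoke the probability structure from Lemma \ref{lem:prob} and its doubleton analogue so that each such $\omega$ carries a non-zero weight relative to $|\Omega|$, giving $\Omega_{a_{kl}}$ a genuine distribution rather than a mere set-theoretic label.

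To close, I would appeal to Definition \ref{def:ar}: the implications $\Omega_{a_k}\Rightarrow\Omega_{a_l}$ and $\Omega_{a_l}\Rightarrow\Omega_{a_k}$ form an association rule whose truth set is $\Omega_{a_{kl}}$, and, exactly as in Proposition \ref{pro:actor}, this identifies $\Omega_{a_{kl}}$ as the behavioral carrier of the relation between the pair, ranging over all pairs drawn from $\{a_1,\dots,a_n\}=A$.

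The main obstacle is that \emph{behavior} is nowhere formally defined; it is only characterized externally by Theorem \ref{theo:behavior}. To make the conclusion watertight, the proof must commit to reading the behavior of a relation as the distribution of web pages witnessing the co-occurrence, in parallel with the reading of the behavior of an actor as the distribution witnessing the occurrence. Once that interpretive hinge is fixed, the argument is a mechanical transfer of each line of Proposition \ref{pro:actor} from singleton to doubleton events through Definition \ref{def:doubleton} and Lemma \ref{lem:sacp}.
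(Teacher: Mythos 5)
Your proposal follows essentially the same route as the paper: invoke Lemma \ref{lem:sacp} to get $\Omega_{a_{kl}}=\Omega_a$, decompose that back into $\Omega_{a_k}\cap\Omega_{a_l}$, and identify this intersection with $B_{a_k}\cap B_{a_l}=r_p(a_k,a_l)$ via Definition \ref{def:rela} (the paper's phrase is that a name can itself be an attribute). Your added remarks on the probability weighting and on \emph{behavior} being only informally defined are fair and make explicit an interpretive step the paper leaves implicit, but they do not change the argument.
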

\begin{proof}
Based on Lemma \ref{lem:sacp} we have $\Omega_{a_{kl}} = \Omega_a$, and 
$
\Omega_a = \{(t_{a},\omega_a)_{ij}\}
         = \{(t_{a}\wedge t_{a},\omega_a\wedge\omega_a)_{ij}\}
         = \{(t_{a_k}\wedge t_{a_l},\omega_{a_k}\wedge\omega_{a_l})\} (i,j=k\wedge l)
         = \{(t_{a_k}\wedge t_{a_l},\omega_{a_k}\wedge\omega_{a_l})\}
         = \{(t_{a_k},\omega_{a_k})\wedge (t_{a_l},\omega_{a_l})\}
         = \{(t_{a_k},\omega_{a_k})\}\cap\{(t_{a_l},\omega_{a_l})\}
         = \Omega_{a_k}\cap\Omega_{a_l}.$
Or because name also can be an attribute of social actor, then Based on Definition \ref{def:rela} we have
$
\Omega_{a_k}\cap\Omega_{a_l} = B_{a_k}\cap B_{a_l}
         = r_p(a_k,a_l).$

\end{proof}

Definition \ref{def:extract} has set the existence of a social actor by means of $\gamma_1$ and behavior of a social actor based on the result clusters (Proposition \ref{pro:actor}), while the behavior of relationship between social actors refers to the cluster based on dyad (Proposition \ref{pro:relation}) and this behavior based on $\gamma_2$ also become behavior of an edge in social network. Specially, in superficial methods $r_p\in R$ means the strength relation between two actors $a_k$ and $a_l$ in $A$ by involving one or more of the similarity measurements: mutual information, Dice coefficient, overlap coefficient, cosine, or for example Jaccard coefficient
\[
J_c = \frac{|\Omega_{a_k}\cap \Omega_{a_l}|}{|\Omega_{a_k}|+|\Omega_{a_l}|-|\Omega_{a_k}\cap \Omega_{a_l}|} \in [0,1]
\]
In this concep of similarity, $B_{a_k}\cap B_{a_l} = {|\Omega_{a_k}\cap \Omega_{a_l}|}/({|\Omega_{a_k}|+|\Omega_{a_l}|-|\Omega_{a_k}\cap \Omega_{a_l}|})$ $= J_c$ such that $e_j\in E$ if $r_p>0$. However the behavior of $r_p (0\le r_p\le 1)$ depends on the behavior of $\Omega_{a_k}\subset\Omega$, $\Omega_{a_l}\subset\Omega$ and $\Omega_{a_k}\cap\Omega_{a_l}\subset\Omega$: $|\Omega_{a_k}|\le|\Omega_{a_l}|$ or $|\Omega_{a_k}|\ge|\Omega_{a_l}|$, $|\Omega_{a_k}\cap\Omega_{a_l}|\le |\Omega_{a_k}|$, and $|\Omega_{a_k}\cap\Omega_{a_l}|\le |\Omega_{a_l}|$. If another measurement concept is similar to $J_c$, then Theorem \ref{theo:behavior} is proved. Therefore, we have
\begin{corollary}
If the behavior of social actors behaves in clusters (of big data)
then the behavior of the clusters (of big data) can be represented by the extracted
social network.
\end{corollary}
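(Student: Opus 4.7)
The plan is to chain Propositions \ref{pro:actor} and \ref{pro:relation} with the structural maps of Definition \ref{def:extract} and the assertion of Theorem \ref{theo:behavior}, so that whatever behavior is encoded at the cluster level can be transported onto the vertex and edge level of the extracted social network $SN = \langle V,E,A,R,\gamma_1,\gamma_2\rangle$. The proof I would write is essentially a bookkeeping argument that the two sides of Theorem \ref{theo:behavior} are exchanged under these maps.

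First I would unwind the hypothesis: assuming that each social actor $a_i\in A$ has its behavior captured by its stand-alone cluster $\Omega_{a_i}$, Proposition \ref{pro:actor} already identifies actor-behavior with cluster-behavior at the singleton level. Next I would pass to dyads: for each pair $(a_k,a_l)$, Lemma \ref{lem:sacp} together with Proposition \ref{pro:relation} identifies the relational behavior with the doubleton cluster $\Omega_{a_{kl}} = \Omega_{a_k}\cap\Omega_{a_l}$, and hence with $r_p(a_k,a_l)=B_{a_k}\cap B_{a_l}$. Then I would push these two strata of behavior through the structural maps: $\gamma_1:A\to V$ (1-1) transports actor-cluster behavior onto the vertex set, and $\gamma_2:R\to E$ transports dyadic-cluster behavior onto the edge set via a similarity measurement such as the Jaccard coefficient $J_c = |\Omega_{a_k}\cap\Omega_{a_l}|/(|\Omega_{a_k}|+|\Omega_{a_l}|-|\Omega_{a_k}\cap\Omega_{a_l}|)$ already associated in the paper. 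Finally I would invoke Theorem \ref{theo:behavior} to assemble these two transports into a single statement: singleton-cluster behavior governs $V$, doubleton-cluster behavior governs $E$, and together they exhaust what $SN$ is able to carry.

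The main obstacle I foresee is ensuring that the representation on the edge side is faithful. Since $\gamma_2$ is only posited as a map $R\to E$ and not as a bijection, different relational strengths could in principle collapse onto the same edge, and different similarity measures (mutual information, Dice, overlap, cosine, Jaccard) could disagree. I would handle this by noting that each of these measures is a monotone function of the three quantities $|\Omega_{a_k}|$, $|\Omega_{a_l}|$, $|\Omega_{a_k}\cap\Omega_{a_l}|$, so that the behavior of $r_p$ is a deterministic function of the underlying cluster behavior, and the edge inherits this function through $\gamma_2$. Once this consistency clause is in place, the corollary becomes an almost direct restatement of Theorem \ref{theo:behavior}: clusters on the left of the implication, the extracted social network on the right.
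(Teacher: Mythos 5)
Your proposal follows essentially the same route as the paper: the corollary there carries no separate proof but is drawn (``Therefore, we have\dots'') from the preceding paragraph, which likewise chains Proposition~\ref{pro:actor} and Proposition~\ref{pro:relation} through $\gamma_1$ and $\gamma_2$ of Definition~\ref{def:extract} and the Jaccard-type dependence of $r_p$ on $|\Omega_{a_k}|$, $|\Omega_{a_l}|$, $|\Omega_{a_k}\cap\Omega_{a_l}|$ to conclude Theorem~\ref{theo:behavior}. Your added monotonicity remark merely sharpens the paper's own ``if another measurement concept is similar to $J_c$'' clause, so the argument is the same in substance.
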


\section{Conclusion}
In this social network study we have presented an analysis for formulating the behavior of resources of social network as a social network mining. Formulation based on a search engine model and the clustering model, and we have obtained an explanation that there are relations between social actors/vertices, relationships/edges, and documents/web based on the clusters are formed. The future work will involve the extraction of a social network to describe the research collaboration for exploring the behavior of social actors and their relationships.

\end{document}